\author{Zhiqiang Wei,   Jiajia Guo, Derrick Wing Kwan Ng, and  Jinhong Yuan
\thanks{Zhiqiang Wei,  Jiajia Guo, Derrick Wing Kwan Ng, and Jinhong Yuan are with the School of Electrical
Engineering and Telecommunications, the University of New South Wales, Australia (email: zhiqiang.wei@student.unsw.edu.au; jiajia.guo@student.unsw.edu.au; w.k.ng@unsw.edu.au; j.yuan@unsw.edu.au). Derrick Wing Kwan Ng is
supported under Australian Research Council Discovery Early Career Researcher Award funding scheme (project number DE170100137). This work is partially supported by Australia Research Council (ARC) Discovery Project DP160104566.}
}
\title{Fairness Comparison of Uplink NOMA and OMA}
\newtheorem{Thm}{Theorem}
\newtheorem{T-Prob}{Transformed Problem}
\newtheorem{proof}{Proof}
\newtheorem{Remark}{Remark}
\newcommand{\abs}[1]{\lvert#1\rvert}
\begin{document}
\maketitle
\begin{abstract}
In this paper, we compare the resource allocation fairness of uplink communications between non-orthogonal multiple access (NOMA) schemes and orthogonal multiple access (OMA) schemes.
Through characterizing the contribution of the individual user  data rate to the system sum rate, we analyze the fundamental reasons that NOMA offers a more fair resource allocation than that of OMA in asymmetric channels.
Furthermore, a fairness indicator metric based on Jain's index is proposed to measure the asymmetry of multiuser channels.
More importantly, the proposed metric provides a selection criterion for choosing between NOMA and OMA for fair resource allocation.
Based on this discussion, we propose a hybrid NOMA-OMA scheme to further enhance the users fairness.
Simulation results confirm the accuracy of the proposed metric and demonstrate the fairness enhancement of the proposed hybrid NOMA-OMA scheme compared to the conventional OMA and NOMA schemes.
\end{abstract}
\renewcommand{\baselinestretch}{0.98}
\normalsize
\section{Introduction}
In the upcoming 5th generation (5G) wireless networks, non-orthogonal multiple access (NOMA) has been recognized as a promising consideration of multiple access scheme to accommodate more users and to improve the spectral efficiency\cite{Dai2015,Ding2015b,WeiSurvey2016,shin2016non,wong2017key}.
A preliminary version of NOMA, multiuser superposition transmission (MUST) scheme, has been proposed in the 3rd generation partnership project long-term evolution advanced (3GPP-LTE-A) networks\cite{Access2015}.
The principal idea of NOMA is to exploit the power domain for multiuser multiplexing and to utilize successive interference cancellation (SIC) to harness inter-user interference (IUI).
In contrast to conventional orthogonal multiple access (OMA) schemes \cite{Kwan_AF_2010,DerrickEEOFDMA}, NOMA enables simultaneous transmission of multiple users on the same degrees of freedom (DOF) via superposition coding with different power levels.
Meantime, by exploiting the received power disparity, advanced signal processing techniques, e.g., SIC, can be adopted to retrieve the desired signals at the receiver. It has been proved that NOMA can increase the system spectral efficiency substantially compared to the conventional OMA schemes\cite{Ding2014,Yang2016,Sun2016Fullduplex}. As a result, NOMA is able to support massive connections, to reduce communication latency, and to increase system spectral efficiency.

Most of existing works focused on  downlink NOMA systems \cite{Ding2014,Yang2016,Sun2016Fullduplex,Wei2016NOMA}.
However, NOMA inherently exists in uplink communications, where electromagnetic waves are naturally superimposed with different received power at a receiving base station (BS). Besides, SIC decoding is generally more affordable for BSs than mobile users.
The authors in \cite{Wang2006} compared NOMA and OMA in the uplink from the perspective of spectral-power efficiency.
Most recently, the authors in \cite{Al-Imari2014,Al-Imari2015} designed a resource allocation algorithm based on the maximum likelihood (ML) receiver at the BS.
On the other hand, another key feature of NOMA is to offer fairness provisioning in resource allocation. In contrast to OMA systems where users with poor channel conditions may temporarily suspended from service,   NOMA allows users with disparate channel conditions being served simultaneously. In \cite{ZhangUplinkNOMA}, a power allocation scheme was proposed to provide the max-min fairness to users in an uplink NOMA system. In \cite{TakedaUplinkFariness}, the authors studied a proportional fair based scheduling scheme for non-orthogonal multiplexed users.  In \cite{Timotheou2015,LiuFairnessNOMA}, power allocation with fairness consideration was investigated for single antenna and multiple antennas NOMA downlink systems, respectively. Despite some preliminary works \cite{ZhangUplinkNOMA,TakedaUplinkFariness,Timotheou2015,LiuFairnessNOMA,Diamantoulakis2016,yu2016antenna} have already considered fairness in resource allocation, it is still unclear why and when NOMA offers a more fair resource allocation than that of OMA.

In this paper, we aim to compare the fairness in resource allocation of uplink between NOMA and OMA.
To this end, a selection criterion is proposed for determining whether NOMA or OMA should be used given current channel state information.
Through characterizing the contribution of achievable data rate of individual users to the system sum rate, we explain the underlying reasons that NOMA is more fair in resource allocation than that of OMA in asymmetric channels.
Furthermore, for two-user NOMA systems\footnote{In a two-user NOMA system, there are at most two users multiplexing on the same DOF to reduce the computational complexity and delay incurred by SIC at receivers.}, we propose a closed-form fairness indicator metric to determine when NOMA is more fair than OMA.
In addition, a simple hybrid NOMA-OMA scheme which adaptively chooses NOMA and OMA according to the proposed metric is proposed to further enhance the users fairness.
Numerical results are shown to verify the accuracy of our proposed metric and to demonstrate the fairness enhancement of the proposed hybrid NOMA-OMA scheme.

The rest of the paper is organized as follows. In Section II, we present the uplink NOMA system model and discuss the capacity regions of NOMA and OMA.
In Section III, the reason of NOMA being more fair than OMA is analyzed. Besides, a closed-form fairness indicator metric and a hybrid NOMA-OMA scheme are proposed.
Simulation results are presented and analyzed in Section IV. Finally,
Section V concludes this paper.

Notations used in this paper are as follows. The circularly symmetric complex Gaussian distribution with mean $\mu$ and variance $\sigma^2$ is denoted by ${\cal CN}(\mu,\sigma^2)$;
$\sim$ stands for ``distributed as"; $\mathbb{C}$ denotes the set of all complex numbers; $\abs{\cdot}$ denotes the absolute value of a complex scalar; $\Pr \left\{  \cdot  \right\}$ denotes the probability of a random event.
\section{System Model}
In this section, we present an uplink NOMA system model and introduce the capacity regions of NOMA and OMA.
\subsection{System Model}
\begin{figure}[t]
\centering
\includegraphics[width=2in]{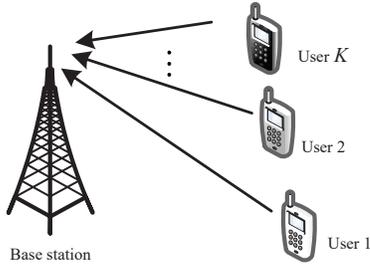}
\caption{The system model for uplink NOMA with one BS and $K$ users.}
\label{NOMA_Uplink_Model}
\end{figure}

We consider an uplink NOMA system with one single-antenna BS and $K$ single-antenna users, as shown in Figure \ref{NOMA_Uplink_Model}.
All the $K$ users are transmitting within a single subcarrier\footnote{To simplify the notations, we focus on the system with $K$ users multiplexing on a single subcarrier. This case will generalized to the case of multi-carrier systems in Section \ref{Hybrid} and Section \ref{Simulation}.} with the same maximum transmit power $P_0$.
For the NOMA scheme, $K$ users are multiplexed on the same subcarrier with different received power levels, while for the OMA scheme, $K$ users are utilizing the subcarrier via the time-sharing strategy \cite{Tse2005}.
%In this paper, we investigate the fairness of the NOMA scheme with optimal power allocation and the OMA scheme with optimal DOF allocation.

For the NOMA scheme, the received signal at the BS is given by
\begin{align}\label{System Model}
y = \sum\limits_{k = 1}^K {\sqrt {p_k} {h_k}{s_k}}  + v.
\end{align}
where $h_k \in \mathbb{C}$ denotes the channel coefficient between the BS and user $k\in\{1,
\ldots,K\}$, $s_k$ denotes the modulated symbol for user $k$, $p_k$ denotes the transmit power of user $k$, and $v\sim{\cal CN}(0,\sigma^2)$ denotes the additive white Gaussian noise (AWGN) at the BS and $\sigma^2$ is the noise power. Without loss of generality, we assume that $\left|h_1\right|^2 \le \left|h_2\right|^2 \le \cdots \le \left|h_K\right|^2$.

\subsection{Capacity Region}\label{ResourceAllocation}
\begin{figure}[t]
\centering
\includegraphics[width=3.5in]{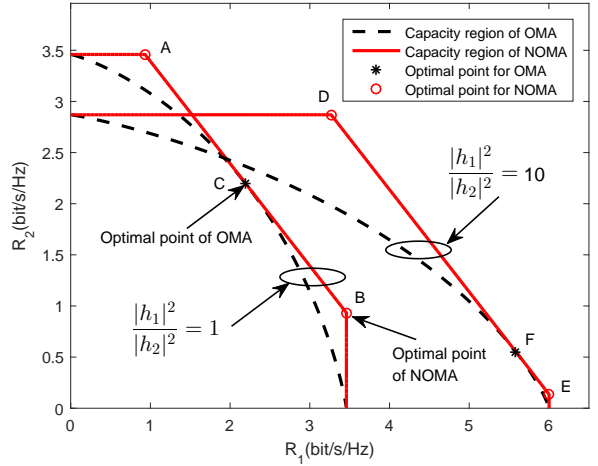}
\caption{The capacity region of NOMA and OMA with one BS and two users for a single channel realization.
The transmit power of both users is $P_0 = 20$ dBm. For the curve of $\frac{\left|h_1\right|^2}{\left|h_2\right|^2} = 1$, we have $\frac{\left|h_1\right|^2}{\sigma ^2} = \frac{\left|h_2\right|^2}{\sigma ^2} = 20$ dB. For the curve of $\frac{\left|h_1\right|^2}{\left|h_2\right|^2} = 10$, we have $\frac{\left|h_1\right|^2}{\sigma ^2} = 18$ dB and $\frac{\left|h_2\right|^2}{\sigma ^2} = 28$ dB.}
\label{CapacityRegion}
\end{figure}

It is well known that the OMA scheme with the optimal DOF allocation and the NOMA scheme with the optimal power allocation can achieve the same system sum rate in uplink transmission\cite{Tse2005,VaeziHK}, as shown in Figure \ref{CapacityRegion}.
Here, the optimal resource allocation for both NOMA and OMA schemes is in the sense of maximizing the system sum rate.
To facilitate the following presentation, we define $\alpha_k$ as a time-sharing factor for user $k$, where $\sum\limits_{k = 1}^K {\alpha_k} = 1$.
Particularly, the optimal DOF allocation of the OMA scheme, i.e., point C and point F in Figure \ref{CapacityRegion}, can be achieved by \cite{Tse2005}:
\begin{equation}
\alpha_k = \frac{\left|h_k\right|^2}{\sum\limits_{i = 1}^K {\left|h_i\right|^2}},\;\forall k.
\end{equation}
Note that $\alpha_k$ can also be interpreted as the normalized channel gain of user $k$.
In other words, the optimal DOF allocation for the OMA scheme is to share the subcarrier with the time duration proportional to their normalized channel gains, whereas it relies on adaptive time allocation according to the instantaneous channel realizations.
We note that the optimal DOF allocation is obtained with all the users transmitting with their maximum transmit power $P_0$ since there is no IUI in the OMA scheme.

On the other hand, power allocation of NOMA that achieves the corner points, i.e., point A, point B, point D, and point E in Figure \ref{CapacityRegion}, can be obtained by simply setting $p_k = P_0,\; \forall k$, and performing SIC at the BS\cite{Tse2005,Ali2016}.
Any rate pairs on the line segments between the corner points can be achieved via a time-sharing strategy.
It can be observed from Figure \ref{CapacityRegion} that NOMA with a time-sharing strategy always outperforms OMA, both in the sense of spectral efficiency and user fairness, since the capacity region of OMA is a subset of that of NOMA.
We note that NOMA without the time-sharing strategy can only achieve the corner points in the capacity region, which might be less fair than OMA in some cases.

In this paper, we study the users fairness of the NOMA scheme without time-sharing and the OMA scheme with an adaptive DOF allocation.
Both schemes achieve the same system sum rate but results in different users fairness.
Intuitively, in Figure \ref{CapacityRegion}, for symmetric channel with $\frac{\left|h_1\right|^2}{\left|h_2\right|^2} = 1$, OMA at point C is more fair than NOMA since both users have the same individual data rate. However, for an asymmetric channel with $\frac{\left|h_1\right|^2}{\left|h_2\right|^2} = 10$, it can be observed that NOMA at the optimal point D is more fair than OMA at the optimal point F.
Therefore, it is interesting to unveil the reasons for fairness enhancement of NOMA in asymmetric channels and to derive a quantitative fairness indicator metric for determining when NOMA is more fair than OMA.

%In the following, we will show that the NOMA scheme without adaptive DOF allocation even performs better than the OMA scheme with adaptive DOF allocation in the sense of fairness.

\section{Fairness Comparison of NOMA and OMA}
In this section, we first present the adopted Jain's fairness index\cite{LanFairness} for quantifying the notion of resource allocation fairness.
Then, we characterize the contribution of individual user data rate to the system sum rate and investigate the underlying reasons of NOMA being more fair than OMA.
Subsequently, for a two-user NOMA system, a closed-form fairness indicator metric is derived from Jain's index \cite{LanFairness} to determine whether using NOMA or OMA for any pair of users on a single subcarrier.
Furthermore, a hybrid NOMA-OMA scheme is proposed which employs NOMA or OMA adaptively based on the proposed metric.

\subsection{Jain's Fairness Index}
In this paper, we adopt the Jain's index\cite{LanFairness} as the fairness measurement in the following
\begin{equation}\label{Jain}
J = \frac{\left( \sum\limits_{k = 1}^K R_k  \right)^2}{K\sum\limits_{k = 1}^K {\left( {R_k} \right)}^2 },
\end{equation}
where $R_k$ denotes the individual rate of user $k$.
Note that $\frac{1}{K}\le J\le1$. A scheme with a higher Jain's index is more fair and it achieves the maximum when all the users obtain the same individual data rate.

\subsection{Fairness Analysis}
\begin{figure}[t]
\centering
\includegraphics[width=3.5in]{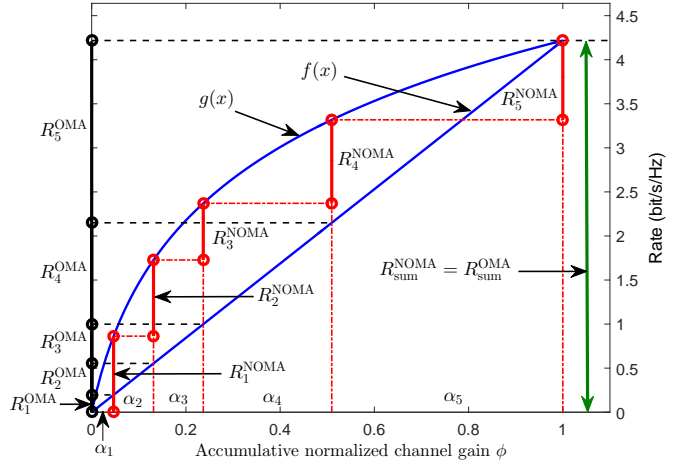}
\caption{An illustration of  system sum rate versus the accumulative normalized channel gains  for the NOMA and OMA with $K=5$ uplink users.  The sum rates of the NOMA scheme and the OMA scheme are denoted by the green double-side arrow.  The individual rates of the NOMA scheme and the OMA scheme are denoted by the red line segments and the black line segments, respectively.}
\label{LinearLog}
\end{figure}

For the optimal resource allocation of both NOMA and OMA schemes discussed in Section \ref{ResourceAllocation}, it is easily to obtain the sum rate and individual data rates for both schemes as follows:
\begin{align}
R_{\mathrm{sum}}^{\mathrm{NOMA}} &= R_{\mathrm{sum}}^{\mathrm{OMA}} = \sum\limits_{i = 1}^K R_{k}^{\mathrm{NOMA}} = \sum\limits_{i = 1}^K R_{k}^{\mathrm{OMA}} \notag\\
&= {\log _2}\left( 1 + \frac{P_0}{\sigma ^2} \sum\limits_{i = 1}^{K}{\left| {h_i} \right|}^2  \right),\label{SumRate}\\
R_{k}^{\mathrm{NOMA}} &= {\log _2}\left( 1 + \frac{P_0 \left| {h_k} \right|^2}{ P_0\sum\limits_{i = 1}^{k-1} {\left| {h_i} \right|}^2 + \sigma ^2}  \right),\; \text{and}\label{IndRate_NOMA}\\
R_{k}^{\mathrm{OMA}} &= {\alpha_k}R_{\mathrm{sum}}^{\mathrm{OMA}},\label{IndRate_OMA}
\end{align}
where $R_{\mathrm{sum}}^{\mathrm{NOMA}}$ and $R_{\mathrm{sum}}^{\mathrm{OMA}}$ denote the system sum rate for NOMA and OMA schemes with the optimal resource allocation, respectively, and $R_{k}^{\mathrm{NOMA}}$ and $R_{k}^{\mathrm{OMA}}$ denote the individual data rate for user $k$ in NOMA and OMA schemes, respectively.

For the NOMA scheme, we first define the accumulative normalized channel gain as $\phi_k = {\sum\limits_{i = 1}^{k} {\alpha_i}}$, $k = \left\{ {1, \cdots ,K} \right\}$, $\phi_0 = 0$, and then rewrite the achievable rate of user $k$ as
\begin{align}
R_{k}^{\mathrm{NOMA}}=& {\log _2}\left( 1 + \frac{{P_0}\phi_k}{\sigma ^2}\sum\limits_{i = 1}^K {\left| {h_i} \right|}^2  \right) \notag\\
& - {\log _2}\left( 1 + \frac{{P_0}\phi_{k-1}}{\sigma ^2}\sum\limits_{i = 1}^K {\left| {h_i} \right|}^2  \right).\label{IndRate_NOMA2}
%R_{\mathrm{sum}}^{\mathrm{NOMA}} &= {\log _2}\left( 1 + \frac{{P_0}\phi_K}{\sigma ^2}\sum\limits_{i = 1}^K {\left| {h_i} \right|}^2  \right),\label{SumRate_NOMA2}
\end{align}
The first term in \eqref{IndRate_NOMA2} denotes the sum rate of a system with $k$ users and the second term denotes the counterpart of a system with $k-1$ users.
In other words, the contribution of user $k$ to the system sum rate depends on the difference of a logarithm function with respect to (w.r.t.) $\phi_k$ and $\phi_{k-1}$. For notational simplicity and without loss of generality, we define the logarithm function as
\begin{equation}
g\left(x\right) = {\log _2}\left( 1 + \Gamma x  \right),\; 0\le x \le 1,
\end{equation}
with
\begin{align}\label{Gamma}
\Gamma = \frac{{P_0}}{\sigma ^2}\sum\limits_{i = 1}^K {\left| {h_i} \right|}^2 \;\; \text{and} \;\;
R_{k}^{\mathrm{NOMA}} = g\left(\phi_k\right) - g\left(\phi_{k-1}\right).
\end{align}

On the other hand, for the OMA scheme, it can be observed from \eqref{IndRate_OMA} that $R_{k}^{\mathrm{OMA}}$ has a linear relationship with $R_{\mathrm{sum}}^{\mathrm{OMA}}$ and the slope w.r.t. the system sum rate is determined by the normalized channel gain ${\alpha_k} = \phi_k - \phi_{k-1}$. Similarly, the contribution of user $k$ to the system sum rate depends on the difference of a linear function of $\phi_k$ and $\phi_{k-1}$, where the linear function is given by
\begin{align}
f\left(x\right) &= {\log _2}\left( 1 + \Gamma \right)x,\; 0\le x \le 1, \;\; \text{and} \notag\\
R_{k}^{\mathrm{OMA}} &= f\left(\phi_k\right) - f\left(\phi_{k-1}\right).
\end{align}

Figure \ref{LinearLog} illustrates the linear and logarithmic increments of the system data rate w.r.t. the accumulative channel gain for OMA and NOMA, respectively, with $K = 5$ uplink users. It can be observed that the NOMA and OMA schemes have the same system sum rate but contributed by different date rates of individual users. In particular, the NOMA scheme achieves a more fair resource allocation than that of the OMA scheme since all the users are allocated with similar individual rates. In fact, the fairness of resource allocation in NOMA inherits from the logarithmic mapping of $g\left(\phi_k\right)$ w.r.t. the accumulative channel gain $\phi_k$. The first and second derivatives of $g(\phi_k)$ are increasing and decreasing w.r.t. $\phi_k$, respectively.  The larger normalized channel gain $\alpha_k$, the slower $g\left(\phi_k\right)$ increasing with $\phi_k$, which results in a smaller individual rate compared to that of the OMA scheme.
On the other hand, a smaller normalized channel gain $\alpha_k$ would result in a higher increasing rate of $g\left(\phi_k\right)$ with $\phi_k$, when a higher individual rate is obtained compared to that of the OMA scheme. For instance, considering the weakest user and the strongest user with their normalized channel gain $\alpha_1$ and $\alpha_K$, respectively,
$R_{1}^{\mathrm{NOMA}}$ is raised up by the logarithm function $g\left(x\right)$ compared to $R_{1}^{\mathrm{OMA}}$, while $R_{K}^{\mathrm{NOMA}}$ is reduced compared to $R_{K}^{\mathrm{OMA}}$.

\begin{Remark}

Note that for  symmetric channels, linear mapping of the OMA scheme is more fair than the NOMA scheme. However, the probability that all the users have the same channel gains is quite small, especially for a system with a large number of users.
\end{Remark}
\subsection{Fairness Indicator Metric}
In practice, most of NOMA schemes assume that there are at most two users multiplexing via the same DOF\cite{Dingtobepublished,Sun2016Fullduplex,Wei2016NOMA}, which can reduce both the computational complexity and decoding delay at the receiver.
Therefore, we focus on the fairness comparison of NOMA and OMA with $K=2$ in this section.
We aim to find a simple metric to determine when NOMA is more fair than OMA for any pair of users, which is fundamentally important for user scheduling design in the system with multiple DOF and multiple users. The fairness indicator metric is proposed in the following theorem.

%From Figure \ref{CapacityRegion}, one can image that a pair of users with distinctive channel gains are prone to use NOMA, while a pair of users with similar channel gains would like to use OMA.
%However, as far as the authors' knowledge, there is no work on
%
%However, particular for the NOMA and OMA scheme, one may desire a simpler fairness matric based on the users' channel gains distribution, which may reveal more insights compared to Jain's index.

\begin{Thm}\label{Theorem1}
Given a pair of users with their channel realizations $\left|h_1\right|^2 \le \left|h_2\right|^2$, the NOMA scheme is more fair in the sense of Jain's fairness index if and only if
\begin{equation}
\frac{\left|h_1\right|^2}{\left|h_2\right|^2} \le \frac{\beta}{1-\beta},
\end{equation}
where $\beta = \frac{W\left( \frac{{\left( {1 + \Gamma } \right)}^{1 + \frac{1}{\Gamma }}\log (1 + \Gamma )}{\Gamma} \right)}{\log (1 + \Gamma )} - \frac{1}{\Gamma }$ and $W(x)$ is the Lambert W function.
In the high SNR regime, i.e., $\Gamma \to \infty $, we have the high SNR approximation of $\beta$ as
\begin{equation}\label{HSNRA}
\widetilde{\beta} \approx \frac{W\left( \log (1 + \Gamma ) \right)}{\log (1 + \Gamma )}.
\end{equation}
\end{Thm}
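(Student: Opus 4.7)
The plan is to reduce the fairness comparison to a one-variable inequality in the normalized channel gain $\alpha_1 = |h_1|^2/(|h_1|^2 + |h_2|^2)\in [0,1/2]$ and then invert the resulting threshold equation via the Lambert $W$ function. Since both schemes achieve the same sum rate $R = \log_2(1+\Gamma)$ when $K=2$, Jain's index from \eqref{Jain} can be written as $J = R^2/[2(R_1^2+R_2^2)]$, which is a monotonically decreasing function of $|R_1 - R_2|$. Hence NOMA is more fair than OMA if and only if $|R_1^{\mathrm{NOMA}} - R_2^{\mathrm{NOMA}}| \le |R_1^{\mathrm{OMA}} - R_2^{\mathrm{OMA}}|$. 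Using $g(x) = \log_2(1+\Gamma x)$ and the formulas in Section III, the OMA gap equals $R(1-2\alpha_1)$ while the NOMA gap equals $|2g(\alpha_1) - R|$; expanding the absolute value produces the two-sided inequality $\alpha_1 R < g(\alpha_1) < (1-\alpha_1) R$. The lower bound is automatic from the concavity of $g$ together with $g(0)=0$, $g(1)=R$, so the fairness condition reduces to $g(\alpha_1)\le (1-\alpha_1)R$, equivalently $1+\Gamma\alpha_1 \le (1+\Gamma)^{1-\alpha_1}$.

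I would then define $F(\alpha) = g(\alpha) - (1-\alpha)R$ and check that $F'(\alpha) = g'(\alpha) + R > 0$, so $F$ is strictly increasing on $[0,1/2]$. Combined with $F(0) = -R < 0$ and a short computation showing $F(1/2) = \log_2\bigl((2+\Gamma)/(2\sqrt{1+\Gamma})\bigr) > 0$ for every $\Gamma > 0$ (since $(2+\Gamma)^2 - 4(1+\Gamma) = \Gamma^2 > 0$), there is a unique root $\beta \in (0,1/2)$, and NOMA is more fair iff $\alpha_1 \le \beta$. The algebraic equivalence $\alpha_1 \le \beta \Leftrightarrow |h_1|^2/|h_2|^2 \le \beta/(1-\beta)$ then follows by clearing denominators, giving the stated ratio form.

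The main obstacle is extracting $\beta$ in closed form from the transcendental equation $1+\Gamma\beta = (1+\Gamma)^{1-\beta}$. Writing $a = \ln(1+\Gamma)$ and multiplying by $(1+\Gamma)^\beta$ gives $(1+\Gamma\beta)e^{\beta a} = e^a$; the key substitution $y = \beta a + a/\Gamma$, which is engineered to absorb the ``$1$'' in $1+\Gamma\beta$ and turn that linear factor into $\Gamma y/a$, reduces the equation to $y\,e^y = a(1+\Gamma)^{1+1/\Gamma}/\Gamma$. Applying the Lambert $W$ function and solving back for $\beta$ yields the expression in the theorem. Finally, for the high-SNR approximation I would use $(1+\Gamma)^{1/\Gamma}\to 1$ and $(1+\Gamma)/\Gamma \to 1$ as $\Gamma\to\infty$, so the argument of $W$ tends to $\log(1+\Gamma)$, and observe that $1/\Gamma$ decays much faster than $W(\log(1+\Gamma))/\log(1+\Gamma)$, delivering the stated $\widetilde{\beta}$.
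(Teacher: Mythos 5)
Your proposal is correct, and it reaches the paper's threshold by a genuinely different (and in places cleaner) route. The paper works directly with the sum of squared rates: it computes $\mathrm{SSR}_{\mathrm{OMA}}$ and $\mathrm{SSR}_{\mathrm{NOMA}}$ as functions of $\alpha_1$, argues that $\mathrm{SSR}_{\mathrm{OMA}}$ is monotonically decreasing on $[0,1/2]$ while $\mathrm{SSR}_{\mathrm{NOMA}}$ decreases up to $\alpha_1=(\sqrt{1+\Gamma}-1)/\Gamma$ and then increases, checks the endpoints (appealing to the capacity-region figure for the sign near $\alpha_1=0^+$), concludes there is a unique crossing $\beta$, and then states the Lambert-$W$ solution of $\mathrm{SSR}_{\mathrm{OMA}}=\mathrm{SSR}_{\mathrm{NOMA}}$ without showing the algebra. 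You instead observe that, with the sum rate fixed, Jain's index is a decreasing function of $\abs{R_1-R_2}$, which converts the quadratic comparison into the linear one $\abs{2g(\alpha_1)-R}\le R(1-2\alpha_1)$; the concavity of $g$ kills the lower branch, leaving the single equation $g(\beta)=(1-\beta)R$. This is in fact the hidden factorization of the paper's quadratic: $\mathrm{SSR}_{\mathrm{OMA}}=\mathrm{SSR}_{\mathrm{NOMA}}$ reduces to $g=(1-\alpha_1)R$ or $g=\alpha_1 R$, and your concavity bound $g(\alpha_1)>\alpha_1 R$ discards the spurious root. Your approach buys a more elementary uniqueness argument (monotonicity of $F(\alpha)=g(\alpha)-(1-\alpha)R$ with explicit sign checks at $0$ and $1/2$, rather than the non-monotone $\mathrm{SSR}_{\mathrm{NOMA}}$ plus a figure), and you actually supply the substitution $y=\beta a+a/\Gamma$ that produces the Lambert-$W$ form, which the paper omits; the paper's SSR route has the advantage of working verbatim from the definition of Jain's index and of generalizing notationally to $K>2$ where the difference-of-two-rates trick does not. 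Your high-SNR limit argument matches the paper's Remark. The only cosmetic point is that at $\alpha_1=\beta$ the two schemes are equally fair, so the ``if and only if'' with a non-strict inequality should be read as ``at least as fair''; the paper is equally loose here.
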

\begin{proof}
Since both the NOMA and OMA schemes have the same sum rate, we need to compare the sum of square of individual rates ($\mathrm{SSR}$), i.e., $\mathrm{SSR} = \sum\limits_{k = 1}^2 {\left( {R_k} \right)}^2$, in the denominator of \eqref{Jain}. The scheme with a smaller $\mathrm{SSR}$ would be more fair in terms of Jain's index. For the OMA scheme, we have
\begin{align}
\mathrm{SSR}_{\mathrm{OMA}} &= \left( \log_2 (1 + \Gamma )\right)^2 \left( {\alpha _1^2}+{\alpha _2^2} \right)\notag\\
& = \left( \log_2 (1 + \Gamma )\right)^2 \left( 1 + 2{\alpha _1^2} -2 \alpha _1 \right),
\end{align}
where $0 \le \alpha _1 \le 0.5$ since we assume $\left|h_1\right|^2 \le \left|h_2\right|^2$.
For the NOMA scheme, the $\mathrm{SSR}_{\mathrm{NOMA}}$ can be given by
\begin{align}
\mathrm{SSR}_{\mathrm{NOMA}}=& \left( \log_2 (1 \hspace*{-1mm}+\hspace*{-1mm} \Gamma \alpha _1)\right)^2 \hspace*{-0.2mm}+\hspace*{-0.2mm} \left( \log_2 (1 \hspace*{-1mm}+\hspace*{-1mm} \Gamma) \hspace*{-0.2mm}-\hspace*{-0.2mm} \log_2 (1 \hspace*{-1mm}+\hspace*{-1mm} \Gamma \alpha _1) \right)^2 \notag\\
=& \left( \log_2 (1 + \Gamma )\right)^2 + 2\left( \log_2 (1 + \Gamma \alpha _1)\right)^2 \notag\\
& - 2 \log_2 (1 + \Gamma ) \log_2 (1 + \Gamma \alpha _1).
\end{align}

Note that a trivial solution for $\mathrm{SSR}_{\mathrm{OMA}}=\mathrm{SSR}_{\mathrm{NOMA}}$ is given with $\alpha _1 = 0$, which corresponds to a single user scenario.
In addition, at $\alpha _1 = 0.5$, i.e., $\left|h_1\right|^2 = \left|h_2\right|^2$, we have $\mathrm{SSR}_{\mathrm{OMA}} < \mathrm{SSR}_{\mathrm{NOMA}}$ as observed from the capacity region in Figure \ref{CapacityRegion}.
Further, $\mathrm{SSR}_{\mathrm{OMA}}$ is a monotonic decreasing function of $\alpha _1$ within $0 \le \alpha _1 \le 0.5$, while $\mathrm{SSR}_{\mathrm{NOMA}}$ is a monotonic decreasing function of $\alpha _1$ within $0 \le \alpha _1 \le \frac{\sqrt {1 + \Gamma }  - 1}{\Gamma }$ and it is increasing with $\alpha _1$ within $\frac{\sqrt {1 + \Gamma }  - 1}{\Gamma } \le \alpha _1 \le 0.5$. Also, from Figure \ref{CapacityRegion}, we can observe that $\mathrm{SSR}_{\mathrm{OMA}} > \mathrm{SSR}_{\mathrm{NOMA}}$ for an arbitrary small positive $\alpha _1$.
Therefore, there is a unique intersection of $\mathrm{SSR}_{\mathrm{OMA}}$ and $\mathrm{SSR}_{\mathrm{NOMA}}$ at $\alpha _1 = \beta$ in the range of $0 < \alpha _1 < 0.5$.
Before the intersection, i.e., $0 < \alpha _1 < \beta$, NOMA is more fair, while after the intersection, i.e., $\beta < \alpha _1 < 0.5$, OMA is more fair.
Solving the equation of $\mathrm{SSR}_{\mathrm{OMA}} = \mathrm{SSR}_{\mathrm{NOMA}}$ within $0 < \alpha _1 < 0.5$, we obtain
\begin{equation}
\beta = \frac{W\left( \frac{{\left( {1 + \Gamma } \right)}^{1 + \frac{1}{\Gamma }}\log (1 + \Gamma )}{\Gamma} \right)}{\log (1 + \Gamma )} - \frac{1}{\Gamma }.
\end{equation}
Furthermore, with $\alpha _1 \le \beta$, we have $\frac{\left|h_1\right|^2}{\left|h_2\right|^2} \le \frac{\beta}{1-\beta}$, which completes the proof for the sufficiency of the proposed fairness indicator metric.

For the necessity, since the intersection of $\mathrm{SSR}_{\mathrm{OMA}}$ and $\mathrm{SSR}_{\mathrm{NOMA}}$ within $0 < \alpha _1 < 0.5$ is unique, the only region within $0 < \alpha _1 < 0.5$ where $\mathrm{SSR}_{\mathrm{OMA}} > \mathrm{SSR}_{\mathrm{NOMA}}$ is $0 < \alpha _1 < \beta$. In other words, NOMA is more fair only if $0 < \alpha _1 < \beta$, which completes the proof for the necessity of the proposed metric.
\end{proof}

\begin{Remark}\label{Remark1}
Note that the proposed fairness indicator metric only depends on the parameter $\Gamma$ defined in \eqref{Gamma}.
As a result, the metric depends on the instantaneous channel gains.
Compared to the Jain's index, our proposed metric is more insightful which connects OMA and NOMA.
Particularly, for the high SNR approximation \eqref{HSNRA}, we can observe that $\widetilde{\beta}$ decreases with the increasing maximum transmit power since the Lambert W function in the numerator increases slower than that of the denominator.
Therefore, the probability of NOMA being more fair will decrease when increasing the maximum transmit power, which will be verified in the simulations.
\end{Remark}
\subsection{A Hybrid NOMA-OMA Scheme}\label{Hybrid}
The proposed fairness indicator metric in Theorem \ref{Theorem1} provides a simple way to determine if NOMA is more fair than OMA, and would serve as a criterion for user scheduling design for systems with multi-carrier serving multiple users. In particular, for an arbitrarily user scheduling strategy,  we propose an adaptive hybrid scheme which decides each pair of users on each subcarrier in choosing either the OMA scheme or the NOMA scheme  to enhance users fairness.
Instead of using the NOMA scheme or the OMA scheme across all the subcarriers, this hybrid NOMA-OMA scheme can enhance the user fairness substantially.
Note that the fairness performance can be further improved if it is jointly designed with the user scheduling. It will be considered in the future work.

\section{Simulation Results}\label{Simulation}
In this section, we adopt simulations to verify the effectiveness of the proposed metric and to evaluate the proposed hybrid NOMA-OMA scheme.
A single cell with a BS located at the center with a cell radius of $400$ m is considered.
There are $N_\mathrm{F} = 128$ subcarriers in the system and $2N_\mathrm{F}$ numbers of users are randomly paired on all the subcarriers.
All the $2N_\mathrm{F}$ users are randomly and uniformly distributed in the cell. We set the noise power in each subcarrier at the BS as $\sigma ^2 = -90$ dBm. The 3GPP path loss model in urban macro cell scenario is adopted in our simulations\cite{Access2010}.

\begin{figure}[t]
\centering
\includegraphics[width=3.5in]{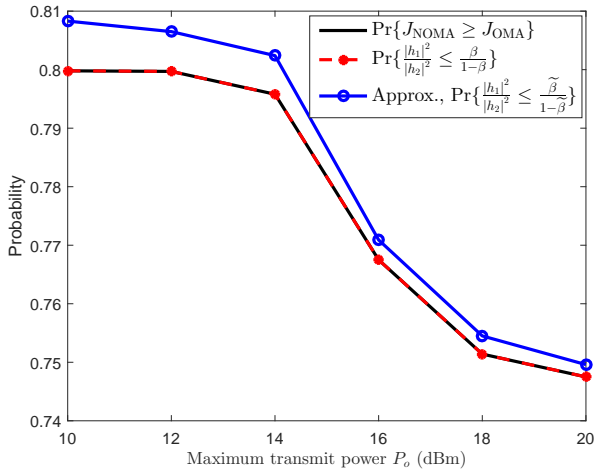}
\caption{The probability of NOMA being more fair than OMA versus the maximum transmit power, $P_0$.}
\label{PredictedActual}
\end{figure}

\begin{figure}[t]
\centering
\includegraphics[width=3.5in]{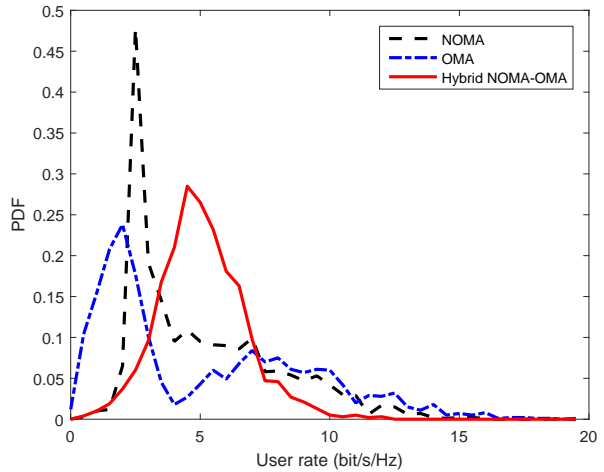}
\caption{The PDF of user rate for the NOMA scheme, the OMA scheme, and the hybrid NOMA-OMA scheme.}
\label{PDF_NOMA_OMA}
\end{figure}

\begin{figure}[t]
\centering
\includegraphics[width=3.5in]{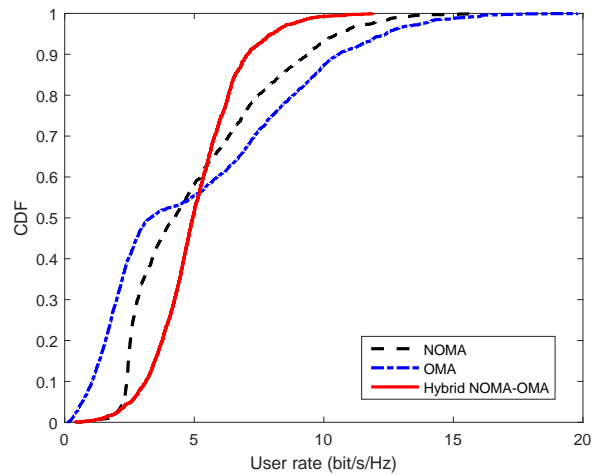}
\caption{The CDF of the NOMA scheme, the OMA scheme, and the hybrid NOMA-OMA scheme.}
\label{CDF_NOMA_OMA}
\end{figure}

Figure \ref{PredictedActual} depicts the probability of NOMA being more fair than OMA versus the maximum transmit power, $P_0$.
It can be observed that $\mathrm{Pr}\left\{ \frac{\left|h_1\right|^2}{\left|h_2\right|^2} \le \frac{\beta}{1-\beta} \right\}$ matches well with $\mathrm{Pr}\left\{ J_{\mathrm{NOMA}} \ge J_{\mathrm{OMA}} \right\}$. In other words, our proposed fairness indicator metric can accurately predict if NOMA is more fair than OMA.
Also, for the high SNR approximation $\widetilde{\beta}$ in \eqref{HSNRA}, $\mathrm{Pr}\left\{ \frac{\left|h_1\right|^2}{\left|h_2\right|^2} \le \frac{\widetilde{\beta}}{1-\widetilde{\beta}}\right\}$ closely matches with the simulation results.
In addition, we can observe that the NOMA scheme has a high probability ($0.75\sim0.8$) of being more fair than that of the OMA scheme in terms of Jain's index.
This is due to the fact that the probability of asymmetric channels is much larger than that of symmetric channels.
On the other hand, the probability of NOMA being more fair is decreasing with the maximum transmit power as discussed in Remark 2.
This is because the NOMA scheme is interference-limited in the high transmit power regime.
Specifically, the strong user (with higher received power) will face a large amount of interference, while the weak user (with lower received power) is interference-free owing to the SIC decoding.
As a result, in the high transmit power regime, the weak user can achieve a much higher data rate than that of the strong user, which may result in a less fair resource allocation than that of OMA.
Even though, NOMA is still more fair than OMA with a probability of about $0.75$ in the high transmit power regime.
%It can also be observed in Figure \ref{LinearLog} with $\Gamma \to \infty $, where $g\left(x\right)$ increases quickly for a small $\phi$ and saturates, and the individual rate for the weak users becomes larger than the strong users

Figure \ref{PDF_NOMA_OMA} shows the probability density function (PDF) of user rate for a multi-carrier system with a random pairing strategy.
Three multiple access schemes are compared, including the NOMA scheme, the OMA scheme, and the proposed hybrid NOMA-OMA scheme.
It can be observed that the individual data rate distribution of the NOMA scheme is more concentrated than that of the OMA scheme, which means that the NOMA scheme offers a more fair resource allocation than the OMA scheme.
Further, the individual rate distribution of the hybrid NOMA-OMA scheme is more concentrated than that of the NOMA scheme.
In fact, our proposed hybrid NOMA-OMA scheme can better exploit the channel gains' relationship via the adaptive selection between NOMA and OMA according to the fairness indicator metric.
Actually, for the three multiple access schemes, we have $J_{\mathrm{NOMA}} = 0.76$ $J_{\mathrm{OMA}} = 0.62$, and $J_{\mathrm{Hybrid}} = 0.91$, where $J_{\mathrm{Hybrid}}$ denotes the Jain's index for the hybrid NOMA-OMA scheme.
In addition, the cumulative distribution function (CDF) of user rate is more of interest in practice, which is illustrated in Figure \ref{CDF_NOMA_OMA}.
We can observe that the 10th-percentile the user rate, which is closely related to fairness and user experience, increased about $1$ bit/s/Hz compared to that of the NOMA scheme.
This shows that our proposed hybrid NOMA-OMA scheme can significantly improve the performance of low-rate users and therefore elevate the quality of user experience.
%We define the cell-edge throughput as the user rate value at the cumulative probability of $10\%$, and it can significantly affects the quality of user experience (QoE) in the system.
%It can be observed that there is an increase of $1$ bit/s/Hz for the cell-edge throughput of the hybrid NOMA-OMA scheme compared to that of the NOMA scheme.
%As a result, the proposed hybrid NOMA-OMA scheme can substantially improve the QoE.

\section{Conclusion}
In this paper, we investigated the resource allocation fairness of the NOMA and OMA schemes in uplink.
The fundamental reason of NOMA being more fair than OMA in asymmetric multiuser channels was analyzed through characterizing the contribution of data rate of each user to the system sum rate.
It is the logarithmic mapping between the normalized channel gains and the individual data rates that exploits the channel gains asymmetry to enhance the users fairness in the NOMA scheme.
Based on this observation, we proposed a quantitative fairness indicator metric for two-user NOMA systems which determines if NOMA offers a more fair resource allocation than OMA.
In addition, we proposed a hybrid NOMA-OMA scheme that adaptively choosing between NOMA and OMA based on the proposed metric to further improve the users fairness.
Numerical results demonstrated that our proposed metric can accurately predict when NOMA is more fair than OMA.
Besides, compared to the conventional NOMA and OMA schemes, the proposed hybrid NOMA-OMA scheme can substantially enhance the users fairness.
% Generated by IEEEtran.bst, version: 1.13 (2008/09/30)

\end{document}